\documentclass[a4paper,twocolumn,11pt,unpublished]{quantumarticle}
\pdfoutput=1
\usepackage[utf8]{inputenc}
\usepackage[english]{babel}
\usepackage[T1]{fontenc}

\usepackage{amssymb}
\usepackage{amsmath}
\usepackage{amsthm}
\newtheorem{lem}{Lemma}
\newtheorem{thm}{Theorem}

\usepackage{graphicx}
\usepackage{physics}

\usepackage[unicode=true, breaklinks=true, hidelinks]{hyperref}

\usepackage[numbers]{natbib}

\renewcommand{\char}{\ensuremath{\mathrm{char}}}

\begin{document}

\title{A Simplified Expression for Quantum Fidelity}

\author{Adrian M\"uller}
\email{adrian.mueller@aalto.fi}
\affiliation{Department of Computer Science, Aalto University, 02150 Espoo, Finland}
\orcid{0009-0005-7744-5950}

\maketitle

\begin{abstract}
Quantum fidelity is one of the most important measures of similarity between mixed quantum states. However, the usual formulation is cumbersome and hard to understand when encountering the first time. This work shows in a novel, elegant proof that the expression can be rewritten into a form, which is not only more concise but also makes its symmetry property more obvious. Further, the simpler expression gives rise to a formulation that is subsequently shown to be more computationally efficient than the best previous methods by avoiding any full decomposition. Future work might look for ways in which other theorems could be affected or utilize the reformulation where fidelity is the computational bottleneck.
\end{abstract}

\section{Introduction}

One of the most fundamental tasks in quantum information processing is the ability to tell the similarity or closeness of two quantum states. It is important in a wide range of applications, like quantum metrology, quantum machine learning, quantum communication, quantum cryptography, or quantum thermodynamics. For example, a similarity measure would be used to assess how much a message was disturbed when transported over distance \cite{Barrett2004DeterministicQT}, or to characterize phase transitions in quantum systems where the ground state might abruptly change \cite{Gu2008FidelityAT}.

A common method for this purpose is the quantum fidelity, also known as Uhlmann fidelity or Uhlmann-Jozsa fidelity, which is capable to assess the similarity of a pair of mixed states. However, the usual formulation of this most general form of quantum fidelity, where both quantum states are mixed states, has several drawbacks. One of the most important drawbacks is that is it computationally expensive, and might become a bottleneck if it has to be calculated often and on large density matrices \cite{liang2019quantum}. Another drawback is that its symmetry property is not immediately obvious and it might be hard to grasp on the first encounter \cite{nielsen2010quantum}.

Previous work has already shown that there exists a simpler expression that is equivalent to the usual formulation \cite{li2015some, baldwin2023efficiently}. However, the present work shows a more elegant proof that does not require the geometric mean of matrices nor pseudo-inverses. The simpler expression also gives rise to a more computationally efficient formulation, which also was already noted in \cite{baldwin2023efficiently}. However, the available empirical evidence is only very limited so far. As a second contribution, this work also seeks to close this gap using rigorous empirical testing to compare it with the most efficient previously known methods and comes with code for reproducibility.

The rest of this work is structured as follows. Firstly, the traditional formulation of quantum fidelity is given. Then, preliminaries are shown which are required for the subsequent proof and the actual main theorem is proven. Finally, the efficiency of the more recent method is investigated before the work is concluded.

\section{Quantum fidelity}

Mixed quantum states are mathematically described by density matrices, which are positive-semidefinite (PSD) complex matrices with trace equal to 1. The textbook definition of fidelity between two mixed states $\rho$ and $\sigma$ is
\begin{equation}\label{eq:nielsen_fidelity}
  F(\rho, \sigma) := \Tr\left(\sqrt{\sqrt{\rho}\, \sigma \sqrt{\rho}}\right)^2
\end{equation}
where $\rho$ and $\sigma$ are the density matrices to compare and $\sqrt{\rho}$ denotes the positive square root of $\rho$ \cite{nielsen2010quantum, uhlmann1976transition}. It is also sometimes introduced with the equivalent formulation
\begin{align}\label{eq:uhlmann_fidelity}
  F(\rho, \sigma)
  &= \| \sqrt{\rho} \sqrt{\sigma} \|_1^2
\end{align}
where $\| \cdot \|_1$ is the trace norm \cite{wilde2017qit}.

%

While there are simplification of this formulation for pure states, this work focuses on the most general case, where both density matrices can be mixed states. However, one relevant well-known simplification is that if $\rho$ and $\sigma$ commute, then equation \eqref{eq:nielsen_fidelity} simplifies to
\begin{equation}\label{eq:simplified_fidelity}
  F(\rho, \sigma) = \Tr\left(\sqrt{\rho \sigma}\right)^2
\end{equation}
which can be seen by commuting $\sigma$ with the second $\sqrt{\rho}$ in equation \eqref{eq:nielsen_fidelity}. This work will reconfirm that equation \eqref{eq:simplified_fidelity} holds even if $\rho$ and $\sigma$ do not commute.

\section{Preliminaries}

\subsection{Relation between trace and eigenvalues}

It is well known that the trace of a diagonalizable square matrix is equal to the sum of its eigenvalues. A deeper explanation can be found, for example, in \cite[p. 296]{axler2015linear}.

\begin{lem}\label{lem:Tr_eq_sumlambda} Let $n \in \mathbb N$ and $A \in \mathbb C^{n\times n}$ be diagonalizable. Then
\begin{equation}
  \Tr(A) = \sum_i \lambda_i(A)
\end{equation}
where $\lambda_i(A)$ is the $i$-th eigenvalue of $A$.
\end{lem}

\begin{proof}[Proof \cite{axler2015linear}]
Express the characteristic polynomial $\char(A)$ as
\begin{align*}
    \char(A) &= \det(\lambda I - A)\\
    &= \lambda^n - \Tr(A)\lambda^{n-1} + ... + (-1)^n \det(A)\\
    &= (\lambda - \lambda_1(A))(\lambda - \lambda_2(A))...(\lambda - \lambda_n(A))
\end{align*}
where the second line is the standard form and the last line the factored form of the polynomial. Comparing coefficients of the $\lambda^{n-1}$ term yields the claim.
\end{proof}

\subsection{Cyclic property of the spectrum}\label{sec:cyclic_spectrum}

Quantum fidelity is essentially a sum over eigenvalues. The two $\sqrt\rho$ in \eqref{eq:nielsen_fidelity} can be brought together using a cyclic permutation.

\begin{lem}[Cyclicity of the spectrum]\label{lem:cyclic_spectrum} Let $A, B \in \mathbb C^{n \times n}$ with $n \in \mathbb N$. Then
\begin{equation}
  \sigma(AB) = \sigma(BA)
\end{equation}
where $\sigma(A)$ is the spectrum of $A$.
\end{lem}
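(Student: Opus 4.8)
The plan is to show that $AB$ and $BA$ share the same spectrum, meaning the same eigenvalues with the same algebraic multiplicities. I would split the argument into two cases depending on whether the intermediate matrix is invertible, since the invertible case is both the cleanest and the natural starting point. In the case where at least one of $A$ or $B$ is invertible — say $B$ — the two products are similar via the relation $BA = B(AB)B^{-1}$, and similar matrices have identical characteristic polynomials, hence identical spectra by comparison of roots. This handles the generic situation immediately.

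\medskip

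The main obstacle is the singular case, where neither matrix need be invertible and the clean similarity transformation breaks down. My first instinct would be a continuity/density argument: the set of invertible matrices is dense in $\mathbb{C}^{n\times n}$, so one perturbs $B$ to $B_\varepsilon := B + \varepsilon I$, which is invertible for all but finitely many $\varepsilon$, applies the invertible case to conclude $\char(AB_\varepsilon) = \char(B_\varepsilon A)$, and then lets $\varepsilon \to 0$. The characteristic polynomial has coefficients that are polynomial (hence continuous) functions of the matrix entries, so the equality of characteristic polynomials survives the limit, which forces equality of the full spectra including multiplicities.

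\medskip

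A cleaner alternative that sidesteps the limiting argument entirely — and which I would prefer to present — is the block matrix identity
\begin{equation}
  \pmat{AB & 0 \\ B & 0} \sim \pmat{0 & 0 \\ B & BA}
\end{equation}
where the similarity is realized by conjugation with the invertible block matrix $\pmat{I & A \\ 0 & I}$. Since the two $2n \times 2n$ matrices are similar they have the same characteristic polynomial, and by the block-triangular structure these factor as $\lambda^n\,\char(AB)$ and $\lambda^n\,\char(BA)$ respectively. Cancelling the common factor $\lambda^n$ yields $\char(AB) = \char(BA)$ unconditionally, from which $\sigma(AB) = \sigma(BA)$ follows. This approach has the advantage of treating invertible and singular cases uniformly with no perturbation needed.

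\medskip

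One subtlety worth flagging is that the paper's Lemma~\ref{lem:Tr_eq_sumlambda} was stated only for diagonalizable matrices, whereas $AB$ and $BA$ here are arbitrary and may fail to be diagonalizable. The characteristic-polynomial route avoids this entirely, since it never invokes diagonalizability and compares roots directly; this is another reason I would favor it over any argument that tries to reason through eigenvectors or traces alone. The final step is simply to observe that $\sigma(\cdot)$ is determined by the roots of the characteristic polynomial, so equal characteristic polynomials give equal spectra, completing the proof.
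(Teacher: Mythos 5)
Your proposal is correct, and at the top level it follows the same skeleton as the paper's proof: both reduce the claim to the fact that $AB$ and $BA$ have the same characteristic polynomial, and then observe that the spectrum (with algebraic multiplicities) consists precisely of the roots of that polynomial. The difference is what happens below that reduction. The paper does not prove the key fact at all; it cites it as well known \cite{Williamson1954TheCP,schmid1970remark}. You prove it, twice: once via the similarity $BA = B(AB)B^{-1}$ in the invertible case plus a perturbation argument (with $B_\varepsilon = B + \varepsilon I$ invertible for all but finitely many $\varepsilon$ and the coefficients of $\char$ depending polynomially, hence continuously, on the entries), and once via the block conjugation
\begin{equation*}
  \pmat{I & -A \\ 0 & I}\pmat{AB & 0 \\ B & 0}\pmat{I & A \\ 0 & I} = \pmat{0 & 0 \\ B & BA},
\end{equation*}
which I have verified; block triangularity then gives $\lambda^n \char(AB) = \lambda^n \char(BA)$, and cancelling $\lambda^n$ in the polynomial ring $\mathbb{C}[\lambda]$ yields $\char(AB) = \char(BA)$ unconditionally. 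Both of your arguments are sound, so your write-up is strictly more self-contained than the paper's; what the paper's version buys is brevity and an explicit pointer to the literature, while yours makes the lemma independent of external references. Your closing observation is also apt and matches the paper's structure: nothing here requires diagonalizability of $AB$ or $BA$, which is exactly why this lemma holds for arbitrary $A, B$, with diagonalizability entering only later in Lemma~\ref{lem:cyclic_spectrum_mapping}.
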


\begin{proof}[Proof] It is well-known that the products $AB$ and $BA$ have the same characteristic polynomial \cite{Williamson1954TheCP,schmid1970remark}. Because the spectrum consists of just the roots of the characteristic polynomial, $AB$ and $BA$ must have the same eigenvalues, with the same multiplicities.
\end{proof}

This would already allow to bring the $\sqrt\rho$ in \eqref{eq:nielsen_fidelity} together, if there was not an additional square root. To account for that, it also has to be shown that the cyclic property of the spectrum holds after applying a mapping $f$.

\begin{lem}[Cyclicity of the spectrum with mapping]\label{lem:cyclic_spectrum_mapping} Let $n \in \mathbb N$, $A, B \in \mathbb C^{n \times n}$, their products $AB$ and $BA$ be diagonalizable, and $f$ a continuous function with domain containing $\sigma(AB)$. Then
\begin{equation}
  \sigma(f(AB)) = \sigma(f(BA))
\end{equation}
where $\sigma(A)$ is the spectrum of $A$ and $f$ operating on a matrix is defined via the eigendecomposition.
\end{lem}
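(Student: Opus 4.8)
The plan is to leverage Lemma~\ref{lem:cyclic_spectrum}, which already establishes $\sigma(AB) = \sigma(BA)$, and show that applying a function $f$ defined via the eigendecomposition preserves this equality. The key observation is that when $f$ acts on a diagonalizable matrix through its eigendecomposition, it simply applies $f$ to each eigenvalue while keeping the eigenvectors fixed. Concretely, if $AB = P D P^{-1}$ with $D = \mathrm{diag}(\lambda_1, \dots, \lambda_n)$, then $f(AB) = P\, f(D)\, P^{-1}$ where $f(D) = \mathrm{diag}(f(\lambda_1), \dots, f(\lambda_n))$. Since similar matrices share the same spectrum, we immediately get
\begin{equation}\label{eq:spec_fAB}
  \sigma(f(AB)) = \{ f(\lambda) : \lambda \in \sigma(AB) \},
\end{equation}
and analogously for $BA$.

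First I would invoke Lemma~\ref{lem:cyclic_spectrum} to assert that $AB$ and $BA$ have identical spectra, including multiplicities. Then I would use the hypothesis that both $AB$ and $BA$ are diagonalizable to write down their eigendecompositions explicitly, and apply the definition of $f$ on a matrix to obtain the right-hand side of \eqref{eq:spec_fAB} for each product. Because the multisets of eigenvalues of $AB$ and $BA$ coincide, the image multisets $\{f(\lambda) : \lambda \in \sigma(AB)\}$ and $\{f(\lambda) : \lambda \in \sigma(BA)\}$ must also coincide. Continuity of $f$ guarantees that $f$ is well-defined on $\sigma(AB)$ (the domain assumption handles this), so the conclusion $\sigma(f(AB)) = \sigma(f(BA))$ follows directly.

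The main subtlety I anticipate is not the algebra but the care needed regarding multiplicities and the precise meaning of "$f$ operating via the eigendecomposition." One must ensure that $f$ maps each eigenvalue consistently regardless of which product it came from — but since $f$ is a single fixed function applied pointwise to eigenvalues, and the eigenvalue multisets of $AB$ and $BA$ are identical by Lemma~\ref{lem:cyclic_spectrum}, this is automatic. A second point worth a sentence is that the spectrum only depends on the eigenvalues of the diagonal factor, not on the particular choice of diagonalizing matrix $P$, so the result is independent of the eigenbasis. I would keep the proof short, essentially reducing it to \eqref{eq:spec_fAB} plus the cited lemma, since the diagonalizability assumption does all the heavy lifting and continuity merely ensures $f$ is defined where we need it.
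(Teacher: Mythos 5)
Your proposal is correct and follows essentially the same route as the paper: invoke Lemma~\ref{lem:cyclic_spectrum} for the equality of the spectra of $AB$ and $BA$, then use the fact that $f$ defined via the eigendecomposition only transforms the eigenvalues, so the transformed spectra coincide. You simply spell out more explicitly (via $f(AB) = P\, f(D)\, P^{-1}$ and the multiset argument) what the paper states in one sentence.
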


\begin{proof} Since $AB$ is diagonalizable and $f$ is defined on $\sigma(AB)$, the transformation $f(AB)$ is applicable. Because $BA$ has the same spectrum as $AB$ (Lemma \ref{lem:cyclic_spectrum}), $f$ is also defined on $\sigma(BA)$, and since $BA$ is also diagonalizable, $f(BA)$ is applicable, as well. Finally, because $f$ is defined to only transform the eigenvalues, the transformed spectra are the same, as well.
\end{proof}


Intuitively speaking, since $AB$ and $BA$ have the same characteristic polynomial and the matrix operation $f$ is defined to be a function only of the roots of that polynomial, the matrices $f(AB)$ and $f(BA)$ have the same (transformed) characteristic polynomial, as well.

Note also that if $A,B$ are PSD matrices then the product $AB$ is diagonalizable with non-negative eigenvalues, as well, despite generally not being PSD itself (except $A$ and $B$ commute) \cite[p. 486]{Horn2012Matrix}. That it has non-negative eigenvalues also follows from Lemma \ref{lem:cyclic_spectrum} by observing that $\sigma((\sqrt{B}\sqrt{A})^\dagger (\sqrt{B}\sqrt{A})) = \sigma(\sqrt{A}B\sqrt{A}) = \sigma(AB)$.



\section{Simplified formulation of quantum fidelity}

\begin{thm} Quantum fidelity, as defined in equation \eqref{eq:nielsen_fidelity}, can be written as
\begin{equation}\label{eq:simplified_fidelity_sqrt}
  F(\rho, \sigma) = \Tr\left(\sqrt{\rho \sigma}\right)^2
\end{equation}
for any two density matrices $\rho$ and $\sigma$.
\end{thm}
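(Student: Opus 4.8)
The plan is to show that the two matrices whose traces appear in \eqref{eq:nielsen_fidelity} and \eqref{eq:simplified_fidelity_sqrt} have identical spectra, so that by Lemma~\ref{lem:Tr_eq_sumlambda} their traces---and hence the squared traces---coincide. The key is to recognise the matrix square root as the continuous mapping $f(x)=\sqrt{x}$ on $[0,\infty)$ and to feed the right factorisation into Lemma~\ref{lem:cyclic_spectrum_mapping}.

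First I would set $A=\sqrt{\rho}$ and $B=\sqrt{\rho}\,\sigma$, so that $AB=\rho\sigma$ while $BA=\sqrt{\rho}\,\sigma\sqrt{\rho}$, the argument of the outer square root in \eqref{eq:nielsen_fidelity}. Before invoking the cyclicity lemma I would verify its hypotheses: $BA$ is PSD because $\sqrt{\rho}$ is Hermitian and $\sigma$ is PSD, so that $v^\dagger\sqrt{\rho}\,\sigma\sqrt{\rho}\,v=(\sqrt{\rho}\,v)^\dagger\sigma(\sqrt{\rho}\,v)\ge 0$, and $AB=\rho\sigma$ is diagonalizable with non-negative eigenvalues as a product of two PSD matrices, as noted at the end of Section~\ref{sec:cyclic_spectrum}. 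This non-negativity is exactly what places both spectra inside the domain $[0,\infty)$ of $f$.

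With the hypotheses in place, Lemma~\ref{lem:cyclic_spectrum_mapping} yields $\sigma\!\left(\sqrt{\rho\sigma}\right)=\sigma\!\left(\sqrt{\sqrt{\rho}\,\sigma\sqrt{\rho}}\right)$. Since both $\rho\sigma$ and $\sqrt{\rho}\,\sigma\sqrt{\rho}$ are diagonalizable, so are their square roots, and Lemma~\ref{lem:Tr_eq_sumlambda} then converts the equality of spectra into $\Tr\!\left(\sqrt{\rho\sigma}\right)=\Tr\!\left(\sqrt{\sqrt{\rho}\,\sigma\sqrt{\rho}}\right)$. Squaring both sides gives \eqref{eq:simplified_fidelity_sqrt} directly.

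I expect the main obstacle to be bookkeeping the hypotheses of Lemma~\ref{lem:cyclic_spectrum_mapping} rather than any deep difficulty: one must ensure both products are diagonalizable and that their common spectrum is non-negative, so that the square root is legitimately defined on it via the eigendecomposition. The choice of factorisation $A=\sqrt{\rho}$, $B=\sqrt{\rho}\,\sigma$ is what makes everything line up in a single step; an alternative such as $A=\sqrt{\rho}$, $B=\sigma\sqrt{\rho}$ would instead produce $\sigma\rho$ and would require one extra application of Lemma~\ref{lem:cyclic_spectrum} to return to $\rho\sigma$.
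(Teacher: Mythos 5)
Your proposal is correct and follows essentially the same route as the paper: apply Lemma~\ref{lem:Tr_eq_sumlambda}, then Lemma~\ref{lem:cyclic_spectrum_mapping} with $f(x)=\sqrt{x}$ to pass from $\sqrt{\rho}\,\sigma\sqrt{\rho}$ to $\rho\sigma$, then Lemma~\ref{lem:Tr_eq_sumlambda} again and square. Your factorisation ($A=\sqrt{\rho}$, $B=\sqrt{\rho}\,\sigma$) is just the paper's choice with the roles of $A$ and $B$ swapped, which is immaterial by the symmetry of the lemma, and your quadratic-form verification of positive semidefiniteness is an equally valid substitute for the paper's factorisation $\sqrt{\rho}\,\sigma\sqrt{\rho}=(\sqrt{\sigma}\sqrt{\rho})^\dagger(\sqrt{\sigma}\sqrt{\rho})$.
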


\begin{proof} For more concise notation, the positive square root of above will be shown. Firstly, use that the trace is equivalent to the sum of the eigenvalues (Lemma \ref{lem:Tr_eq_sumlambda}).
\begin{align}
  \Tr\left(\sqrt{\sqrt{\rho}\, \sigma \sqrt{\rho}}\right) = \sum_i \lambda_i\left(\sqrt{\sqrt{\rho}\, \sigma \sqrt{\rho}}\right)\label{eq:eig_nielsen_fidelity}
\end{align}
Secondly, since $\sqrt{\rho}\, \sigma \sqrt{\rho} = (\sqrt{\sigma}\sqrt{\rho})^\dagger (\sqrt{\sigma}\sqrt{\rho})$ is PSD and thus diagonalizable with non-negative eigenvalues, $\rho\sigma$ is diagonalizable with non-negative eigenvalues (see section \ref{sec:cyclic_spectrum}), and the square root is continuous on $\mathbb R^+_0$, Lemma \ref{lem:cyclic_spectrum_mapping} can be applied with $A = \sqrt{\rho}\, \sigma$, $B = \sqrt{\rho}$, and $f(x) = \sqrt{x}$.
\begin{equation}\label{eq:after_cyclic}
  \sum_i \lambda_i\left(\sqrt{\sqrt{\rho}\, \sigma \sqrt{\rho}}\right) = \sum_i \lambda_i\left(\sqrt{\rho \sigma}\right)
\end{equation}
Finally, use Lemma \ref{lem:Tr_eq_sumlambda} again.
\begin{equation}
  \sum_i \lambda_i\left(\sqrt{\rho \sigma}\right) = \Tr\left(\sqrt{\rho \sigma}\right)
\end{equation}
Squaring this result again yields the claim.
\end{proof}

\section{Efficiency}

\begin{figure}[h]
  \centering
  \includegraphics[width=0.48\textwidth]{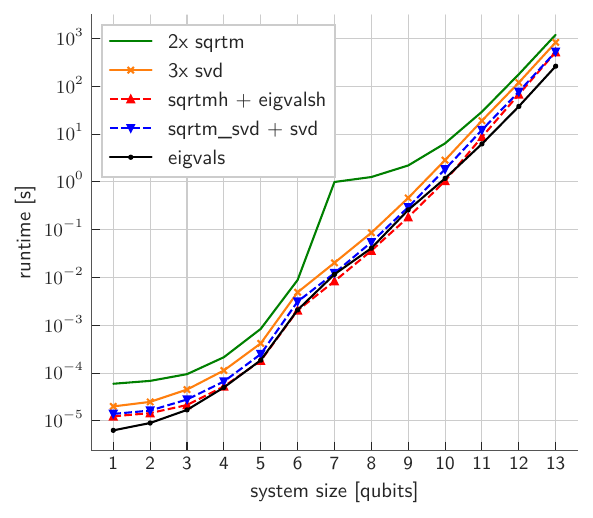}
  \caption{Comparing the performance of different methods to calculate quantum fidelity. ``2x sqrtm'' refers to calculating two general eigendecompositions to calculate the matrix square roots of $\rho$ and $\sqrt{\rho} \sigma \sqrt{\rho}$ in equation \eqref{eq:nielsen_fidelity}. ``3x svd'' is another commonly used method based on the alternative formulation in equation \eqref{eq:uhlmann_fidelity}. ``sqrtmh + eigvalsh'' computes fidelity as $\sum_i \sqrt{\lambda_i(\sqrt{\rho}\sigma\sqrt{\rho})}$, utilizing the specialized eigendecomposition routine for hermitian matrices. ``sqrtm\_svd + svd'' works similarly, but using the SVD. Finally, ``eigvals'' only uses one general routine to calculate eigenvalues, following equation \eqref{eq:efficient_fidelity}. Note that the y-axis is in log-scale. The average values are calculated over $\lceil 10^4 / 2^{k-3}\rceil$ runs, where $k$ is the system size. The Python library Numpy running on an Apple M1 chip have been used for the computation.}
  \label{fig:performance_comparison}
\end{figure}

Applying the spectral mapping theorem \cite[p. 263]{rudin1991fa} to the RHS of equation \eqref{eq:eig_nielsen_fidelity} followed by Lemma \ref{lem:cyclic_spectrum} with $A = \sqrt{\rho}\, \sigma$, $B = \sqrt{\rho}$ provides a computationally more efficient method
\begin{equation}\label{eq:efficient_fidelity}
    F(\rho, \sigma) = \left(\sum_i \sqrt{\lambda_i\left(\rho \sigma\right)}\right)^2
\end{equation}
since it requires only one eigendecomposition. The form \eqref{eq:efficient_fidelity} was already noted before \cite{baldwin2023efficiently}.

To compare the efficiency of the formulation \eqref{eq:efficient_fidelity} with previous methods, it is important to note that there are already more efficient ways to calculate the fidelity based on the usual formulations, compared to applying three general spectral decomposition. Since the eigenvalues of PSD matrices, and thus the matrix square root, can be calculated using the SVD, formulation \eqref{eq:uhlmann_fidelity} can be utilized applying SVD for $\sqrt\sigma$, $\sqrt\rho$, and the trace norm. In formulation \eqref{eq:nielsen_fidelity}, $\sqrt{\rho} \sigma \sqrt{\rho}$ is a hermitian matrix which thus allows to utilize more efficient algorithms to compute eigendecompositions of hermitian matrices provided by common linear algebra libraries. Further, applying the spectral mapping theorem on formulation \eqref{eq:eig_nielsen_fidelity} yields
\begin{equation}
  \sum_i \sqrt{\lambda_i\left(\sqrt{\rho}\, \sigma \sqrt{\rho}\right)}
\end{equation}
which requires eigenvectors only for $\sqrt\rho$. In contrast, equation \eqref{eq:efficient_fidelity} does not need eigenvectors at all, only the eigenvalues of $\rho\sigma$. However, since $\rho\sigma$ is generally not hermitian, the general eigendecomposition has to be used.

For the experiment, $\lceil 10^4 / 2^{k-3}\rceil$ random pairs of density matrices of different sizes corresponding to $k=1,...,13$ qubits have been generated to compare the methods (see Figure \ref{fig:performance_comparison}). The method following equation \eqref{eq:efficient_fidelity} performed significantly better for the smallest and the largest tested random pairs of density matrices (about 2 times faster) and was otherwise almost on par with the best alternative method. This result can probably be further improved by observing that the product $\rho\sigma$ has non-negative eigenvalues (see section \ref{sec:cyclic_spectrum}). All methods discussed so far have a worst-case time complexity of $\mathcal O(n^3)$ in the state dimensionality. However, for more structured or sparse density matrices, the time complexity can be greatly improved, as well \cite{spielman2014nearly}. Furthermore, with quantum computers even an exponential speed-up could be possible \cite{shao2022computing}.

\section{Conclusion}

This work has shown an elegant way to prove that quantum fidelity can be simplified to $F(\rho, \sigma) = \Tr\left(\sqrt{\rho \sigma}\right)^2$ for any two density matrices $\rho$ and $\sigma$. This form is more concise then the usual expression and allows to grasp the symmetry property immediately. Further, a more efficient calculation -- by avoiding any full decomposition -- has been discussed and empirically validated.

Future work might take a look at the consequences of this reformulation on other theorems. Additionally, the computational advantage might allow for faster results where fidelity on mixed states is the computational bottleneck. Finally, advances in reducing the time complexity for calculating eigenvalues can be directly translated into improvements to calculate quantum fidelity.

\section*{Acknowledgements}

I thank Prof. Vikas Garg at Aalto University for support and encouragement. I also thank Jonathan A. Jones, Bartosz Regu\l a, and Danylo Yakymenko for feedback on earlier versions of this work.

\bibliographystyle{quantum}
\bibliography{ref}

\begin{thebibliography}{10}

\bibitem{Barrett2004DeterministicQT}
Murray~D Barrett, Murray~D Barrett, John Chiaverini, Tobias Schaetz, Joseph~W.
  Britton, Wayne~M. Itano, John~D. Jost, Emanuel Knill, Christopher Langer,
  Dietrich Leibfried, Roee Ozeri, and David~J. Wineland.
\newblock ``Deterministic quantum teleportation of atomic qubits''.
\newblock \href{https://dx.doi.org/10.1038/nature02608}{Nature {\bf 429},
  737--739}~(2004).

\bibitem{Gu2008FidelityAT}
Shi jian Gu.
\newblock ``Fidelity approach to quantum phase transitions''.
\newblock \href{https://dx.doi.org/10.1142/S0217979210056335}{International
  Journal of Modern Physics B {\bf 24}, 4371--4458}~(2008).

\bibitem{liang2019quantum}
Yeong-Cherng Liang, Yu-Hao Yeh, Paulo~EMF Mendon{\c{c}}a, Run~Yan Teh,
  Margaret~D Reid, and Peter~D Drummond.
\newblock ``Quantum fidelity measures for mixed states''.
\newblock \href{https://dx.doi.org/10.1088/1361-6633/ab1ca4}{Reports on
  Progress in Physics {\bf 82}, 076001}~(2019).

\bibitem{nielsen2010quantum}
Michael~A Nielsen and Isaac~L Chuang.
\newblock ``Quantum computation and quantum information''.
\newblock \href{https://dx.doi.org/10.1017/CBO9780511976667}{Cambridge
  university press}. ~(2010).

\bibitem{li2015some}
Jin Li, Rajesh Pereira, and Sarah Plosker.
\newblock ``Some geometric interpretations of quantum fidelity''.
\newblock \href{https://dx.doi.org/10.1016/j.laa.2015.08.037}{Linear Algebra
  and its Applications {\bf 487}, 158--171}~(2015).

\bibitem{baldwin2023efficiently}
Andrew~J Baldwin and Jonathan~A Jones.
\newblock ``Efficiently computing the uhlmann fidelity for density matrices''.
\newblock \href{https://dx.doi.org/10.1103/PhysRevA.107.012427}{Physical Review
  A {\bf 107}, 012427}~(2023).

\bibitem{uhlmann1976transition}
Armin Uhlmann.
\newblock ``The ``transition probability'' in the state space of a*-algebra''.
\newblock \href{https://dx.doi.org/10.1016/0034-4877(76)90060-4}{Reports on
  Mathematical Physics {\bf 9}, 273--279}~(1976).

\bibitem{wilde2017qit}
Mark~M. Wilde.
\newblock ``Quantum information theory''.
\newblock \href{https://dx.doi.org/10.1017/9781316809976}{Cambridge University
  Press}. ~(2017).
\newblock 2nd edition.

\bibitem{axler2015linear}
Sheldon Axler.
\newblock ``Linear algebra done right''.
\newblock \href{https://dx.doi.org/10.1007/978-3-319-11080-6}{Springer}.
  ~(2015).

\bibitem{Williamson1954TheCP}
John~H. Williamson.
\newblock ``{The characteristic polynomials of AB and BA}''.
\newblock \href{https://dx.doi.org/10.1017/S0950184300003104}{Edinburgh
  Mathematical Notes {\bf 39}, 13--13}~(1954).

\bibitem{schmid1970remark}
Josef Schmid.
\newblock ``A remark on characteristic polynomials''.
\newblock \href{https://dx.doi.org/10.1080/00029890.1970.11992650}{The American
  Mathematical Monthly {\bf 77}, 998--999}~(1970).

\bibitem{Horn2012Matrix}
Roger~A. Horn and Charles~R. Johnson.
\newblock ``Matrix analysis''.
\newblock \href{https://dx.doi.org/10.1017/CBO9781139020411}{Cambridge
  University Press}. ~(2012).
\newblock 2nd edition.

\bibitem{rudin1991fa}
W.~Rudin.
\newblock ``Functional analysis''.
\newblock McGraw-Hill. New York~(1991).
\newblock 2nd edition.

\bibitem{spielman2014nearly}
Daniel~A Spielman and Shang-Hua Teng.
\newblock ``Nearly linear time algorithms for preconditioning and solving
  symmetric, diagonally dominant linear systems''.
\newblock \href{https://dx.doi.org/10.1137/090771430}{SIAM Journal on Matrix
  Analysis and Applications {\bf 35}, 835--885}~(2014).

\bibitem{shao2022computing}
Changpeng Shao.
\newblock ``Computing eigenvalues of diagonalizable matrices on a quantum
  computer''.
\newblock \href{https://dx.doi.org/10.1145/3527845}{ACM Transactions on Quantum
  Computing {\bf 3}, 1--20}~(2022).

\end{thebibliography}

\end{document}